\newtheorem{theorem}{Theorem}
\newtheorem{lemma}[theorem]{Lemma}
\newtheorem{remark}[theorem]{Remark}
\theoremstyle{definition}
\newtheorem{definition}[theorem]{Definition}
\begin{document}
%
\title{Orthogonal subspace based fast iterative thresholding algorithms for joint sparsity recovery}

\author{Ningning Han, Shidong Li, and Jian Lu  \emph{Member, IEEE}
\thanks{This work was supported by the National Natural Science Foundation of China under grants 61972265, 11871348 and 61373087, by the Natural Science Foundation of Guangdong Province of China under grant 2020B1515310008, by the Educational Commission of Guangdong Province of China under grant 2019KZDZX1007, and by the Guangdong Key Laboratory of Intelligent Information Processing, China, and the NSF of USA (DMS-1615288).
}
\thanks{Ningning Han (ningninghan@szu.edu.cn) and Jian Lu ({\em{corresponding author}}, jianlu@szu.edu.cn) are with Shenzhen Key Laboratory of Advanced Machine Learning and Applications, College of Mathematics and Statistics, Shenzhen
University, Shenzhen, 518060.}
\thanks{Shidong Li (shidong@sfsu.edu) is with Department of Mathematics, San Francisco State University, San Francisco, CA94132.}
}

\markboth{}%
{HAN \MakeLowercase{\textit{et al.}}: LaTex Template for IEEE SPL}

\maketitle

\begin{abstract}
Sparse signal recoveries from multiple measurement vectors (MMV) with joint sparsity property have many applications in signal, image, and video processing. The problem becomes much more involved when snapshots of the signal matrix are temporally correlated. With signal's temporal correlation in mind,  we provide a framework of iterative MMV algorithms based on thresholding, functional feedback and null space tuning. Convergence analysis for exact recovery is established. Unlike most of iterative greedy algorithms that select indices in a measurement/solution space, we determine indices based on an orthogonal subspace spanned by the iterative sequence. In addition, a functional feedback that controls the amount of energy relocation from the ``tails'' is implemented and analyzed.  It is seen that the principle of functional feedback is capable to lower the number of iteration and speed up the convergence of the algorithm. Numerical experiments demonstrate that the proposed algorithm has a clearly advantageous balance of efficiency, adaptivity and accuracy compared with other state-of-the-art algorithms.
\end{abstract}

\begin{IEEEkeywords}
Multiple measurement vectors, null space tuning, thresholding, feedback, orthogonal subspace.
\end{IEEEkeywords}

%
\IEEEpeerreviewmaketitle

\section{Introduction}
In sparse reconstruction signal models with joint sparsity property, signals are sampled at $L$ time instances, resulting in the multiple measurement vector (MMV) model:
\begin{equation}\label{equation2}
 \begin{array}{l}
Y=\Phi X+E,
 \end{array}
\end{equation}
where $Y\in\mathbb{C}^{M\times L}$ is the observation matrix containing $L$ measurement/snapshot (column) vectors, $\Phi\in\mathbb{C}^{M\times N}$ is the measurement matrix governed by the specific physical system, and $X\in\mathbb{C}^{N\times L}$ is the underlying source signal matrix, to be recovered.  $E\in\mathbb{C}^{M\times L}$ is an additive measurement noise matrix.

In this system, $L$ measurements share the same row support and elements
in each nonzero row of $X$ are temporally correlated. The solution problem to a noiseless MMV model can be formulated as
\begin{equation}\label{equation3}
 \begin{array}{l}
 \min\limits_{X}\|X\|_{0}~\text{s.t.}~Y=\Phi X,
 \end{array}
\end{equation}
where $\|X\|_{0}=|\text{supp}(X)|$, $\text{supp}(X)=\{1\leq i\leq N: X_{i\cdot}\neq 0\}$, $ X_{i\cdot}$ is the $i$-th row of $X$. In \cite{Chen}, the authors have shown that $X$ is the unique solution of (\ref{equation3}) if
\begin{equation}\label{limit}
 \begin{array}{l}
 \|X\|_{0}<\frac{\text{spark}(\Phi)+\text{rank}(Y)-1}{2},
 \end{array}
\end{equation}
where spark$(\Phi)$ is the smallest number of linearly dependent columns of $\Phi$.

A large majority of effective algorithms for solving (\ref{equation3}) are based on two strategies: extending single measurement vector (SMV) algorithms or exploiting signal subspaces. 
Well-known algorithms of the first class include simultaneous orthogonal matching pursuit (SOMP) \cite{Tropp1}-\cite{Determeand4}, mixed norm minimization techniques \cite{Cotter}-\cite{Khanna}, simultaneous greedy algorithms \cite{Foucart,Blanchard}. However, these algorithms, without exploiting subspace structures or temporal correlations, have not offered realistic improvements over performances than that of SMV cases. Recently, a multiple sparse Bayesian learning (MSBL) algorithm \cite{Zhangqq}-\cite{Shang 2}, as an extension of sparse Bayesian SMV algorithms, is seen to improve recovery performances by modeling temporal correlation of sparse vectors. Another strategy is to exploit subspace structures spanned by measurement vectors. Representative algorithms include, e.g., sequential compressive MUSIC (SeqCS-MUSIC) \cite{MKim,Kimdd}, subspace-augmented MUSIC (SA-MUSIC+OSMP) \cite{KLee}, rank aware order recursive matching pursuit (RA-ORMP) \cite{Davies,Blanchards,junKim}, semi-supervised MUSIC (SS-MUSIC) \cite{ZWen} etc.

In this report, we provide a computationally efficient ``greedy'' algorithm for joint sparsity signal recoveries from their multiple measurement vectors. The proposed algorithm combines procedures of hard thresholding (HT), functional feedback ($f$-FB) for ``tail'' energy shrinkage and enhanced feasibility, the null space tuning (NST), and a novel variable selection mechanism.  The novel criterion of variable selection is based on estimations of significant coefficients in an orthogonal subspace of the iterative sequence.  The cardinality of selected variables is determined by the feedback function $f$. Experimental results show that the proposed algorithm provides superior performances in terms of the efficiency and the critical sparsity (i.e., the maximum sparsity level at which the perfect recovery is guaranteed \cite{WDai}). In fact, the rate of successful recovery of our algorithm has broken through the algebraic upper bound given in (\ref{limit}).
\begin{figure}[th]
\centering
\begin{tabular}{c@{\hskip -0.9cm}c}
\resizebox {1.85in}{1.5in} {\includegraphics{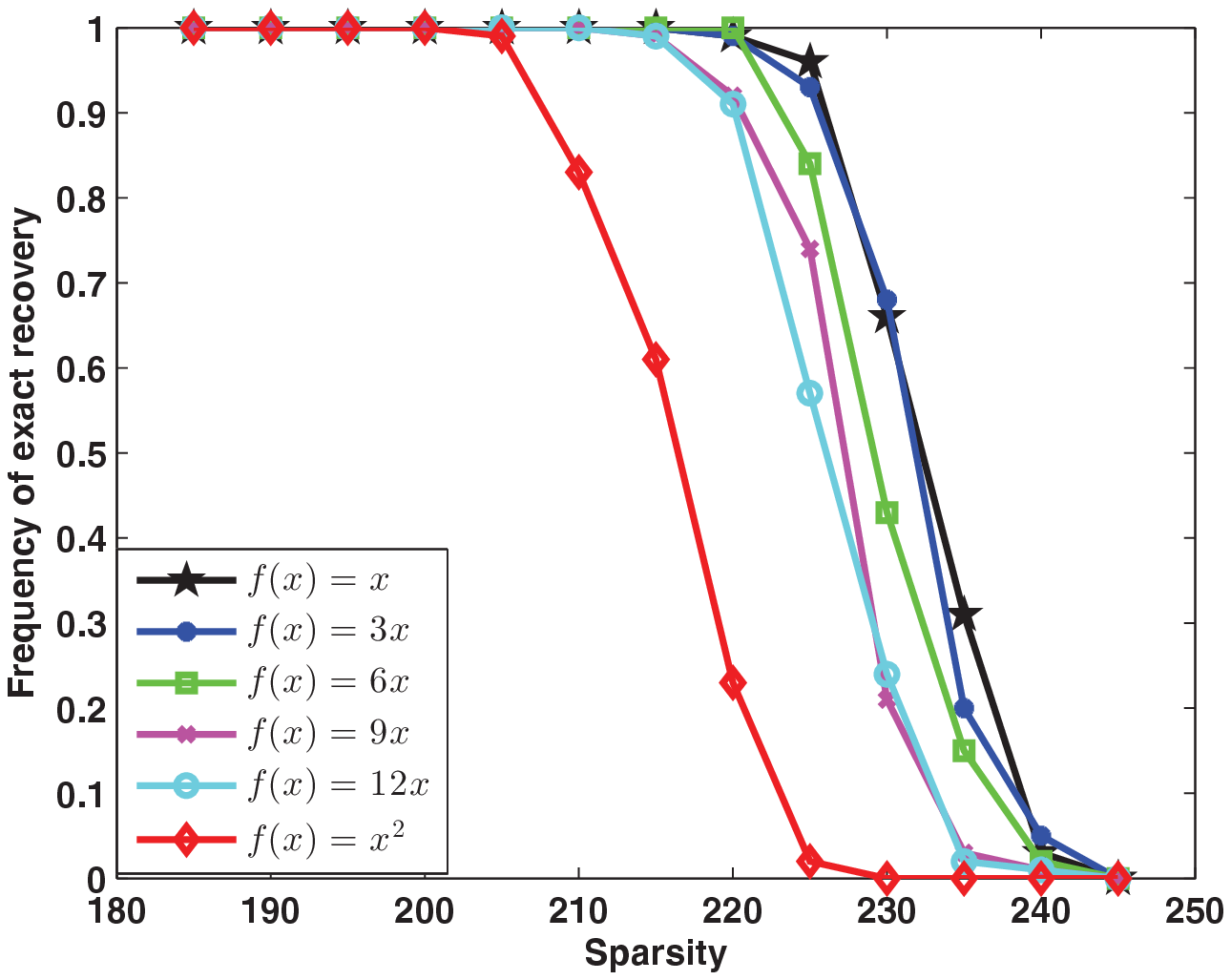}}
\quad & \quad
\resizebox {1.85in}{1.5in} {\includegraphics{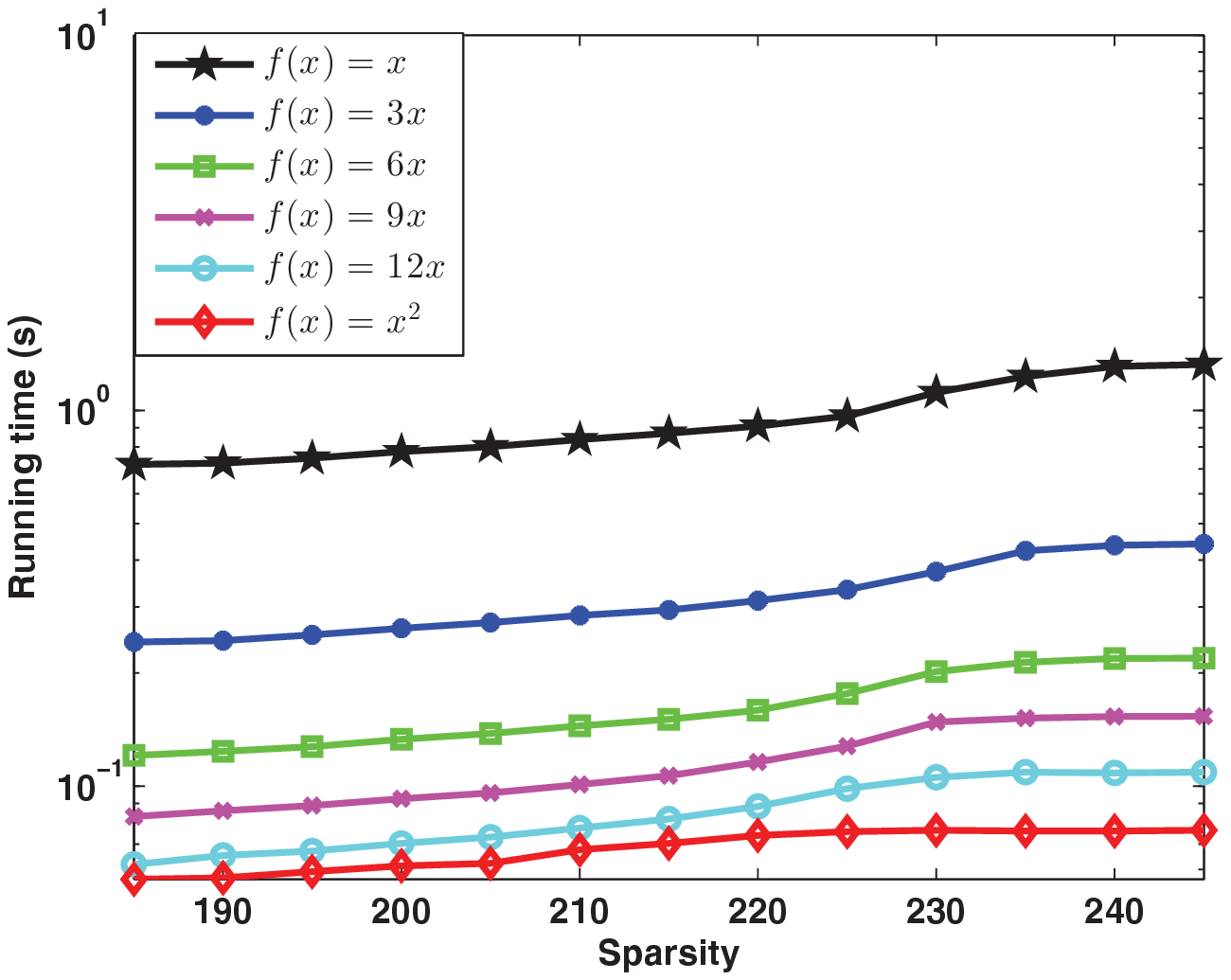}}
\end{tabular}
\vskip -0.3cm
\caption{\small  Left: Frequency of exact recovery as a function of sparsity; right: running time as a function of sparsity.} 
\label{figure:Micro_Doppler}
\end{figure}
\section{Orthogonal subspace NST+HT+$f$-FB algorithm}
\subsection{Notations}
A submatrix of $\Phi$ with columns indexed by a set $I$ is denoted by $\Phi_{I}$ and a submatrix of $\Phi$ with rows indexed by a set $J$ is denoted by $\Phi_{(J)}$. We denote the $i$-th row and the $j$-th column of a matrix $\Phi$ by $\Phi_{i\cdot}$, and $\Phi_{\cdot j}$, respectively.  $T\triangle T'$ is the symmetric difference of $T$ and $T'$, i.e., $T\triangle T' =(T\setminus T')\cup(T'\setminus T)$. $\bm{H}_{T}(X)$ is a linear operator that sets all but elements belong to rows indexed by $T$ of $X$ to zero.

\begin{center}
     \begin{tabular}{lp{76mm}} \toprule[1pt]
     \multicolumn{2}{l}{{\bf Algorithm}~$1$~~OSNST+HT+$f$-FB}\\
       \hline
     &{\bf Input:} $\Phi$, $Y$, $\epsilon$, $f(\cdot)$, $K$;\\
      &{\bf Output:} $W$;\\
      &{\bf Initialize:} $k=1$, $W^{0}=0$;\\
      &{\bf While} $\|Y-\Phi W^{k-1}\|_{2}>\epsilon$ \text{and} $k<K$ {\bf do}\\
      &~~~~~~$X^{k}=W^{k-1}+\Phi^{\ast}(\Phi \Phi^{\ast})^{-1}(Y-\Phi W^{k-1})$; \\
      &~~~~~~$Q^{k}=$orth$(X^{k})$; \\
      &~~~~~~$T_{k}=\{$Indices of $f(k)$ largest $\|Q^{k}_{i\cdot}\|_{2}\}$;\\
      &~~~~~~$W_{T_{k}}^{k}=X_{T_{k}}^{k}+(\Phi_{T_{k}}^{\ast}\Phi_{T_{k}})^{-1}\Phi_{T_{k}}^{\ast}\Phi_{T^{c}_{k}}X_{T^{c}_{k}}^{k}$;\\
      &~~~~~~$W_{T^{c}_{k}}^{k}=0$;\\
      &~~~~~~$k=k+1$;\\
      &{\bf end while};\\
     \hline
     \end{tabular}
\end{center}

\subsection{Algorithm framework}

The iterative framework of approximation and null space tuning (NST) algorithms is as follows
\begin{equation*} \label{eq4}
\left\{ \begin{aligned}
         \begin{aligned}
         &W^{k}=\mathbb{D}(X^{k}),\\
         &X^{k+1}=X^{k}+\mathbb{P}(W^{k}-X^{k}).\\
         \end{aligned}
         \end{aligned} \right.
\end{equation*}
Here $\mathbb{D}(X^{k})$ approximates the desired solution by various principles, and $\mathbb{P}:=I-\Phi^{\ast}(\Phi\Phi^{\ast})^{-1}\Phi$ is the orthogonal projection onto ker$(\Phi)$.

Since the sequence $\{X^{k}\}$ is always feasible (i.e., $Y=\Phi X^{k}$) under the NST principle, one may split $Y$ as
\begin{equation*}\label{eq5}
Y=\Phi X=\Phi_{T_{k}}X_{(T_{k})}^{k}+\Phi_{T^{c}_{k}}X_{(T^{c}_{k})}^{k},
\end{equation*}
where $T_{k}$ includes indices of $f(k)$ largest $\|Q^{k}_{i\cdot}\|_{2}$ ($i\in\{1,\ldots,N\}$),  $f(\cdot)\geq0$ is a non-decreasing function and columns of $Q^{k}$ are an orthonormal basis for the column space of $X^{k}$, i.e., $Q^{k}$=orth$(X^{k})$.
The mechanism of feedback is to feed the contribution of $\Phi_{T^{c}_{k}}X_{(T^{c}_{k})}^{k}$ to $Y$ back to im($\Phi_{T_{k}}$), the image of $\Phi_{T_{k}}$. A straightforward way is to set
\begin{equation*}\label{eq6}
\Lambda^{k}=\arg\min \limits_{\Lambda}\|\Phi_{T_{k}}\Lambda-\Phi_{T^{c}_{k}}X_{(T^{c}_{k})}^{k}\|_{2},
\end{equation*}
which has the best/least-square solution
\begin{equation*}\label{eq7}
\Lambda^{k}=(\Phi_{T_{k}}^{\ast}\Phi_{T_{k}})^{-1}\Phi_{T_{k}}^{\ast}\Phi_{T_{k}^{c}}X_{(T_{k}^{c})}^{k}.
\end{equation*}
The orthogonal subspace iterative thresholding algorithm with functional feedback and null space tunning (OSNST+HT+$f$-FB) is then established in {\em Algorithm 1}.

\subsection{Convergence analysis}
In this paper, we assume the number of snapshots is smaller than the dimension of measurement, i.e., $L<M$, and the measurement matrix $Y$ is full column rank, i.e., rank$(Y)=L$. We now turn to the convergence of OSNST+HT+$f$-FB.
\begin{definition}\label{RIP} \cite{Candes}. For each integer $s = 1, 2,\cdots$, the restricted isometry constant (RIC) $\delta_{s}$ of a matrix $\Phi$ is defined as the smallest number $\delta_{s}$ such that
\begin{equation*}
\begin{array}{l}
(1-\delta_{s})\|X\|_{F}^{2}\leq\|\Phi X\|_{F}^{2}\leq(1+\delta_{s})\|X\|_{F}^{2}
 \end{array}
\end{equation*}
holds for all $s$ row-sparse matrix $X$. Equivalently, it is given by
\begin{equation*}
\begin{array}{l}
\delta_{s}=\max\limits_{|S|\leq s}\|I-\Phi_{S}^{\ast}\Phi_{S}\|_{2}.
 \end{array}
\end{equation*}
\end{definition}

\begin{definition}\label{P-RIP} \cite{SDLi}. For each integer $s = 1, 2,\cdots $ the preconditioned restricted isometry constant $\gamma_{s}$ of a matrix $A$ is defined as the smallest number $\gamma_{s}$ such that
\begin{equation*}
\begin{array}{l}
(1-\gamma_{s})\|X\|_{F}^{2}\leq\|(\Phi\Phi^{\ast})^{-\frac{1}{2}}\Phi X\|_{F}^{2}
 \end{array}
\end{equation*}
holds for all $s$ row-sparse matrix $X$. In fact, the preconditioned restricted isometry constant $\gamma_{s}$ represents the restricted isometry property of the preconditioned matrix $(\Phi\Phi^{\ast})^{-\frac{1}{2}}\Phi$. Since
\begin{equation*}
\begin{array}{l}
\|(\Phi\Phi^{\ast})^{-\frac{1}{2}}\Phi X\|_{F}\leq\|(\Phi\Phi^{\ast})^{-\frac{1}{2}}\Phi\|_{2}\|X\|_{F}=\|X\|_{F},
 \end{array}
\end{equation*}
$\gamma_{s}$ is actually the smallest number such that, for all $s$ row-sparse matrix $X$,
\begin{equation*}
\begin{array}{l}
(1-\gamma_{s})\|X\|_{F}^{2}\leq\|(\Phi\Phi^{\ast})^{-\frac{1}{2}}\Phi X\|_{F}^{2}\leq(1+\gamma_{s})\|X\|_{F}^{2}.
 \end{array}
\end{equation*}
It indicates $\gamma_{s}(\Phi)=\delta_{s}((\Phi\Phi^{\ast})^{-\frac{1}{2}}\Phi)$. Equivalently, it is given by
\begin{equation*}
\begin{array}{l}
\gamma_{s}=\max\limits_{|S|\leq s}\|I-\Phi_{S}^{\ast}(\Phi\Phi^{\ast})^{-1}\Phi_{S}\|_{2}.
 \end{array}
\end{equation*}
\end{definition}

\begin{definition}\label{feasible}
Let the feasible solution space of (\ref{equation3}) be $\mathcal {X}=\{X\in\mathbb{C}^{N\times L}: Y=\Phi X\}$. Define the modified matrix condition number of $\mathcal {X}$ by $\alpha=\max\limits_{X\in\mathcal {X}}\frac{\sigma_{\max}(X)}{\sigma_{\min}(X)}$, where $\sigma_{\min}(X)$ and $\sigma_{\max}(X)$ denote the smallest and the largest nonzero singular values of $X$, respectively.
\end{definition}

\begin{lemma} \label{lemma2.7}
Let  $U,V\in\mathbb{C}^{N\times L}$ with $|supp (U) \cup supp (V)|\leq t$, then $|\langle U,(I-\Phi^{\ast}\Phi)V\rangle|\leq\delta_{t}\|U\|_{F}\|V\|_{F}$. Suppose $|R \cup supp (V)|\leq t$, then $\|[(I-\Phi^{\ast}\Phi)V]_{(R)}\|_{F}\leq\delta_{t}\|V\|_{F}$.
\end{lemma}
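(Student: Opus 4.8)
The plan is to push both statements down to the combined row-support and then apply the operator-norm characterization of $\delta_{t}$ from Definition~\ref{RIP}. For the first inequality, put $S=\mathrm{supp}(U)\cup\mathrm{supp}(V)$, so that $|S|\le t$ and both $U$ and $V$ vanish on every row outside $S$. Because $U$ is row-supported on $S$, the Frobenius inner product reduces to $\langle U,(I-\Phi^{\ast}\Phi)V\rangle=\langle U_{(S)},[(I-\Phi^{\ast}\Phi)V]_{(S)}\rangle$. The one algebraic identity I need is that, since $\mathrm{supp}(V)\subseteq S$, one has $\Phi V=\Phi_{S}V_{(S)}$ and hence $[\Phi^{\ast}\Phi V]_{(S)}=\Phi_{S}^{\ast}\Phi_{S}V_{(S)}$, so that $[(I-\Phi^{\ast}\Phi)V]_{(S)}=(I-\Phi_{S}^{\ast}\Phi_{S})V_{(S)}$. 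Then Cauchy--Schwarz for the Frobenius inner product together with the columnwise bound $\|AZ\|_{F}\le\|A\|_{2}\|Z\|_{F}$ gives $|\langle U,(I-\Phi^{\ast}\Phi)V\rangle|\le\|U_{(S)}\|_{F}\,\|I-\Phi_{S}^{\ast}\Phi_{S}\|_{2}\,\|V_{(S)}\|_{F}$. Since $\|U_{(S)}\|_{F}=\|U\|_{F}$, $\|V_{(S)}\|_{F}=\|V\|_{F}$, and $\|I-\Phi_{S}^{\ast}\Phi_{S}\|_{2}\le\delta_{t}$ by Definition~\ref{RIP} (as $|S|\le t$), the first claim follows.

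For the second inequality, set $S=R\cup\mathrm{supp}(V)$, so $|S|\le t$ and $R\subseteq S$. Then $[(I-\Phi^{\ast}\Phi)V]_{(R)}$ is obtained from $[(I-\Phi^{\ast}\Phi)V]_{(S)}$ by discarding the rows indexed by $S\setminus R$, hence $\|[(I-\Phi^{\ast}\Phi)V]_{(R)}\|_{F}\le\|[(I-\Phi^{\ast}\Phi)V]_{(S)}\|_{F}$. Using the same identity $[(I-\Phi^{\ast}\Phi)V]_{(S)}=(I-\Phi_{S}^{\ast}\Phi_{S})V_{(S)}$ as above and the operator-norm bound, this is at most $\|I-\Phi_{S}^{\ast}\Phi_{S}\|_{2}\|V_{(S)}\|_{F}\le\delta_{t}\|V\|_{F}$. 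One could instead deduce this from the first inequality by duality: write $\|[(I-\Phi^{\ast}\Phi)V]_{(R)}\|_{F}$ as the supremum of $|\langle U,(I-\Phi^{\ast}\Phi)V\rangle|$ over matrices $U$ with $\mathrm{supp}(U)\subseteq R$ and $\|U\|_{F}\le 1$, and note that every such $U$ satisfies $|\mathrm{supp}(U)\cup\mathrm{supp}(V)|\le t$.

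I do not expect a real obstacle here; this is the matrix analogue of the standard near-orthogonality estimate in compressed sensing. The only points that need care are the bookkeeping between row restrictions (of $U$ and $V$) and column restrictions (of $\Phi$) when justifying $[(I-\Phi^{\ast}\Phi)V]_{(S)}=(I-\Phi_{S}^{\ast}\Phi_{S})V_{(S)}$, and the fact that $\delta_{|S|}\le\delta_{t}$ whenever $|S|\le t$, which is immediate from the ``$\max_{|S|\le s}$'' form of $\delta_{s}$. Beyond that, the proof is just Cauchy--Schwarz and submultiplicativity of the Frobenius norm under left multiplication.
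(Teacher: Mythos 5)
Your proposal is correct and follows essentially the same route as the paper: restrict everything to the combined support $S$, identify the restricted operator with $I-\Phi_{S}^{\ast}\Phi_{S}$, and apply Cauchy--Schwarz together with the operator-norm characterization of $\delta_{t}$. The only cosmetic difference is in the second inequality, where you bound $\|(I-\Phi_{S}^{\ast}\Phi_{S})V_{(S)}\|_{F}$ directly by the operator norm and then discard rows, whereas the paper writes the squared norm as an inner product against a matrix supported on $R$ and invokes the first part --- precisely the duality variant you mention as an alternative.
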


\begin{proof}
Let $T=supp (U)\cup  supp (V)$, we then have
\begin{equation*}
 \begin{array}{l}
|\langle U,(I-\Phi^{\ast}\Phi)V\rangle| =|\langle U,V\rangle-\langle \Phi U,\Phi V\rangle| \\
=|\langle U_{(T)},V_{(T)}\rangle-\langle \Phi_{T}U_{(T)},\Phi_{T}V_{(T)}\rangle|\\
=|\langle U_{(T)},(I-\Phi_{T}^{\ast}\Phi_{T})V_{(T)}\rangle|\\
\leq\|U_{(T)}\|_{F}\|(I-\Phi_{T}^{\ast}\Phi_{T})V_{(T)}\|_{F}\\
\leq\|U_{(T)}\|_{F}\|I-\Phi_{T}^{\ast}\Phi_{T}\|_{2}\|V_{(T)}\|_{F}\\
\leq\delta_{t}\|U\|_{F}\|V\|_{F}.\\
 \end{array}
\end{equation*}

The first and the second inequalities are due to the Cauchy-Schwarz inequality, and the sub-multiplicativity of matrix norms, respectively. The last step is by {\em Definition \ref{RIP}}.
It then follows that
\begin{equation*}
 \begin{array}{l}
\|[(I-\Phi^{\ast}\Phi)V]_{(R)}\|_{F}^{2} =\langle(\bm{H}_{R}\left((I-\Phi^{\ast}\Phi)V\right),(I-\Phi^{\ast}\Phi)V\rangle\leq\delta_{t}\|[(I-\Phi^{\ast}\Phi)V]_{(R)}\|_{F}\|V\|_{F}.
 \end{array}
\end{equation*}
Therefore, $\|[(I-\Phi^{\ast}\Phi)V]_{(R)}\|_{F}\leq\delta_{t}\|V\|_{F}$.
\end{proof}

\begin{remark}\label{remark11}
Let $\gamma_t$ be the P-RIP constant of $\Phi$ and $U,V\in\mathbb{C}^{N\times L}$ with $|supp (U) \cup supp (V)|\leq t$, then $|\langle U,(I-\Phi^{\ast}(\Phi\Phi^{\ast})^{-1}\Phi)V\rangle|\leq\gamma_{t}\|U\|_{F}\|V\|_{F}$. Suppose $|R \cup supp (V)|\leq t$, then $\|[(I-\Phi^{\ast}(\Phi\Phi^{\ast})^{-1}\Phi)V]_{(R)}\|_{F}\leq\gamma_{t}\|V\|_{F}$.
\end{remark}

\begin{lemma}\label{rema}
For $E\in\mathbb{C}^{M\times L}$, $\|[\Phi^{\ast}(\Phi\Phi^{\ast})^{-1}E]_{(T)}\|_{F}\leq\sqrt{1+\theta_{t}}\|E\|_{F}$, where $\theta_{t}=\delta_{t}((\Phi\Phi^{\ast})^{-1}\Phi)$ and $\delta_{t}((\Phi\Phi^{\ast})^{-1}\Phi)$ is RIC of matrix $(\Phi\Phi^{\ast})^{-1}\Phi$.
\end{lemma}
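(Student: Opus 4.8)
The plan is to rewrite the operator $\Phi^{\ast}(\Phi\Phi^{\ast})^{-1}$ as the adjoint of the matrix $B:=(\Phi\Phi^{\ast})^{-1}\Phi$ — the very matrix whose RIC is $\theta_t$ — and then reduce the Frobenius-norm bound to a spectral-norm bound on a column submatrix of $B$.

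First I would note that $\Phi\Phi^{\ast}$ is Hermitian and (under the standing full-row-rank assumption on $\Phi$) positive definite, so its inverse is Hermitian as well; hence $\Phi^{\ast}(\Phi\Phi^{\ast})^{-1} = \bigl((\Phi\Phi^{\ast})^{-1}\Phi\bigr)^{\ast} = B^{\ast}$. Next, the row-restriction operator $[\,\cdot\,]_{(T)} = \bm{H}_{T}(\cdot)$ applied to $B^{\ast}E$ keeps exactly the rows of $B^{\ast}E$ indexed by $T$; since the rows of $B^{\ast}$ indexed by $T$ are precisely $(B_{T})^{\ast}$, where $B_{T}$ is the column submatrix of $B$ indexed by $T$, this gives the identity $[\Phi^{\ast}(\Phi\Phi^{\ast})^{-1}E]_{(T)} = (B_{T})^{\ast}E$.

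Then I would apply sub-multiplicativity of the Frobenius norm with respect to the spectral norm, $\|(B_{T})^{\ast}E\|_{F}\le \|(B_{T})^{\ast}\|_{2}\,\|E\|_{F} = \|B_{T}\|_{2}\,\|E\|_{F}$, so it only remains to estimate $\|B_{T}\|_{2}$. Invoking \emph{Definition \ref{RIP}} for the matrix $B$, and using $|T|\le t$, we have $\|I-(B_{T})^{\ast}B_{T}\|_{2}\le \delta_{t}(B)=\theta_{t}$, so every eigenvalue of $(B_{T})^{\ast}B_{T}$ lies in $[1-\theta_{t},\,1+\theta_{t}]$, whence $\|B_{T}\|_{2}^{2}=\lambda_{\max}\!\bigl((B_{T})^{\ast}B_{T}\bigr)\le 1+\theta_{t}$. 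Chaining the inequalities yields $\|[\Phi^{\ast}(\Phi\Phi^{\ast})^{-1}E]_{(T)}\|_{F}\le\sqrt{1+\theta_{t}}\,\|E\|_{F}$.

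The only genuine obstacle is notational bookkeeping: one must carefully track that $[\,\cdot\,]_{(T)}$ selects \emph{rows} while the subscript $B_{T}$ selects \emph{columns}, so that the row-restriction of $B^{\ast}$ becomes the adjoint of a column-restriction of $B$; once that identification is in place the statement is a direct computation followed by a one-line spectral estimate, with no iteration or induction required.
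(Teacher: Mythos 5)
Your proof is correct and rests on exactly the same idea as the paper's: that $\Phi^{\ast}(\Phi\Phi^{\ast})^{-1}$ is the adjoint of $B=(\Phi\Phi^{\ast})^{-1}\Phi$, whose restricted isometry constant is $\theta_{t}$. The paper packages this as an inner-product/Cauchy--Schwarz argument (expanding $\|[B^{\ast}E]_{(T)}\|_{F}^{2}$, transferring the adjoint, and cancelling a factor), whereas you bound $\|(B_{T})^{\ast}E\|_{F}$ directly via $\|B_{T}\|_{2}\le\sqrt{1+\theta_{t}}$ --- the same estimate in a slightly more explicit, equally valid form.
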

\begin{proof}
\begin{equation*}
 \begin{array}{l}
\|[\Phi^{\ast}(\Phi\Phi^{\ast})^{-1}E]_{(T)}\|_{F}^{2}=\langle \Phi^{\ast}(\Phi\Phi^{\ast})^{-1}E, \bm{H}_{T}(\Phi^{\ast}(\Phi\Phi^{\ast})^{-1}E) \rangle\\
=\langle E, (\Phi\Phi^{\ast})^{-1}\Phi\bm{H}_{T}(\Phi^{\ast}(\Phi\Phi^{\ast})^{-1}E)\rangle\\
\leq\|E\|_{F}\sqrt{1+\theta_{t}}\|[\Phi^{\ast}(\Phi\Phi^{\ast})^{-1}E]_{(T)}\|_{F}.
 \end{array}
\end{equation*}
Applying {\em Definition \ref{RIP}} to the matrix $\Phi^{\ast}(\Phi\Phi^{\ast})^{-1}$ obtains the last step.
Hence, for all $E\in\mathbb{C}^{M\times L}$, we have $\|[\Phi^{\ast}(\Phi\Phi^{\ast})^{-1}E]_{(T)}\|_{F}\leq\sqrt{1+\theta_{t}}\|E\|_{F}$.
\end{proof}

\begin{lemma}\label{Lemma2.11}
 Let $Y=\Phi X+E$, where $X\in\mathbb{C}^{N\times L}$ is $s$ row-sparse with $S=$\text{supp}$(X)$ and $E\in\mathbb{C}^{M\times L}$ is the measurement error. If $\widetilde{W}\in\mathbb{C}^{N\times L}$ is $\widetilde{s}$ row-sparse, $\widetilde{X}=\widetilde{W}+\Phi^{\ast}(\Phi\Phi^\ast)^{-1}(Y-\Phi \widetilde{W})$, $\widetilde{Q}=orth(\widetilde{X})$, and $T$ is an index set of $t\geq s$ largest $\|\widetilde{Q}_{i\cdot}\|_{2}$, then
\begin{equation*}\label{h}
 \begin{array}{l}
\|X_{(T^{c})}\|_{F}\leq\sqrt{2}\alpha(\gamma_{s+\widetilde{s}+t}\|X-\widetilde{W}\|_{F}+\sqrt{1+\theta_{t+s}}\|E\|_{F}),
 \end{array}
\end{equation*}
where $\theta_{t+s}(\Phi)=\delta_{t+s}((\Phi\Phi^\ast)^{-1}\Phi)$.
\end{lemma}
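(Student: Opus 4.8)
The plan is to reduce the bound on $\|X_{(T^{c})}\|_{F}$ to a bound on the perturbation $\widetilde{X}-X$ restricted to the symmetric difference $S\triangle T$, and then invoke Remark~\ref{remark11} and Lemma~\ref{rema}. First I would rewrite the update as
\begin{equation*}
\widetilde{X}-X = -\left(I-\Phi^{\ast}(\Phi\Phi^{\ast})^{-1}\Phi\right)(X-\widetilde{W}) + \Phi^{\ast}(\Phi\Phi^{\ast})^{-1}E,
\end{equation*}
using $Y=\Phi X+E$, and observe that $\Phi\widetilde{X}=Y$, so $\widetilde{X}$ is feasible in the sense of Definition~\ref{feasible} and hence $\sigma_{\max}(\widetilde{X})/\sigma_{\min}(\widetilde{X})\leq\alpha$.

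The core geometric step is to transfer the selection rule for $T$ (stated in terms of the orthonormal factor $\widetilde{Q}$) into a statement about $\widetilde{X}$. Writing $\widetilde{X}=\widetilde{Q}\widetilde{R}$ with $\widetilde{R}=\widetilde{Q}^{\ast}\widetilde{X}$, the matrix $\widetilde{R}$ has full row rank and the same nonzero singular values as $\widetilde{X}$; since $\widetilde{X}_{i\cdot}=\widetilde{Q}_{i\cdot}\widetilde{R}$, one gets $\sigma_{\min}(\widetilde{X})\|\widetilde{Q}_{i\cdot}\|_{2}\leq\|\widetilde{X}_{i\cdot}\|_{2}\leq\sigma_{\max}(\widetilde{X})\|\widetilde{Q}_{i\cdot}\|_{2}$, hence the analogous two-sided bound for $\|\widetilde{X}_{(A)}\|_{F}$ versus $\|\widetilde{Q}_{(A)}\|_{F}$ on any index set $A$. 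Because $T$ collects the $t\geq s\geq|S|$ largest $\|\widetilde{Q}_{i\cdot}\|_{2}$, we have $|S\setminus T|\leq|T\setminus S|$, so an injection $S\setminus T\hookrightarrow T\setminus S$ that respects the ordering of the row norms gives $\|\widetilde{Q}_{(S\setminus T)}\|_{F}\leq\|\widetilde{Q}_{(T\setminus S)}\|_{F}$; combining with the singular-value bounds and feasibility yields $\|\widetilde{X}_{(S\setminus T)}\|_{F}\leq\alpha\,\|\widetilde{X}_{(T\setminus S)}\|_{F}$.

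Now, since $\mathrm{supp}(X)=S$, we have $X_{(T^{c})}=X_{(S\setminus T)}$ and $\widetilde{X}_{(T\setminus S)}=(\widetilde{X}-X)_{(T\setminus S)}$, while $X_{(S\setminus T)}=\widetilde{X}_{(S\setminus T)}+(X-\widetilde{X})_{(S\setminus T)}$. Chaining the triangle inequality with the bound $\|\widetilde{X}_{(S\setminus T)}\|_{F}\leq\alpha\,\|\widetilde{X}_{(T\setminus S)}\|_{F}$, with $\alpha\geq1$, and with the fact that $S\setminus T$ and $T\setminus S$ are disjoint subsets of $S\triangle T$, one obtains $\|X_{(T^{c})}\|_{F}\leq\sqrt{2}\,\alpha\,\|(\widetilde{X}-X)_{(S\triangle T)}\|_{F}$. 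Finally I would split $(\widetilde{X}-X)_{(S\triangle T)}$ according to the display above: the contribution from $(I-\Phi^{\ast}(\Phi\Phi^{\ast})^{-1}\Phi)(X-\widetilde{W})$ is handled by Remark~\ref{remark11}, noting that $(S\triangle T)\cup\mathrm{supp}(X-\widetilde{W})\subseteq S\cup T\cup\mathrm{supp}(\widetilde{W})$ has cardinality at most $s+\widetilde{s}+t$, which produces $\gamma_{s+\widetilde{s}+t}\|X-\widetilde{W}\|_{F}$; the noise contribution is handled by Lemma~\ref{rema} with the index set $S\triangle T$ of size at most $s+t$, producing $\sqrt{1+\theta_{t+s}}\|E\|_{F}$. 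Adding the two estimates gives exactly the claimed inequality.

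The main obstacle, and the step I would be most careful about, is the geometric transfer from $\widetilde{Q}$ to $\widetilde{X}$: one must justify that $\widetilde{R}=\widetilde{Q}^{\ast}\widetilde{X}$ inherits precisely the nonzero singular values of $\widetilde{X}$ and has full row rank, so that the row-wise sandwich $\sigma_{\min}(\widetilde{X})\|\widetilde{Q}_{i\cdot}\|_{2}\leq\|\widetilde{X}_{i\cdot}\|_{2}\leq\sigma_{\max}(\widetilde{X})\|\widetilde{Q}_{i\cdot}\|_{2}$ is valid, and one must treat ties in the ordering of $\|\widetilde{Q}_{i\cdot}\|_{2}$ carefully when building the injection $S\setminus T\hookrightarrow T\setminus S$. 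Everything else is bookkeeping with the support inclusions and the triangle inequality.
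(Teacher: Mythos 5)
Your proposal is correct and follows essentially the same route as the paper: transfer the row-selection rule from $\widetilde{Q}$ to $\widetilde{X}$ via the extreme singular values (the paper does this through the SVD $\widetilde{X}=\widetilde{U}_{\ell}\widetilde{\Sigma}_{(\ell)}\widetilde{V}^{\ast}$, which is your $\widetilde{Q}\widetilde{R}$ factorization), compare $S\setminus T$ against $T\setminus S$ to reach $\|X_{(T^{c})}\|_{F}\leq\sqrt{2}\,\alpha\,\|(\widetilde{X}-X)_{(S\triangle T)}\|_{F}$, and finish with Remark~\ref{remark11} and Lemma~\ref{rema} on the decomposition $\widetilde{X}-X=(I-\Phi^{\ast}(\Phi\Phi^{\ast})^{-1}\Phi)(\widetilde{W}-X)+\Phi^{\ast}(\Phi\Phi^{\ast})^{-1}E$. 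The only differences are cosmetic (your injection argument in place of the paper's ``eliminating common terms over $T\cap S$''), and your ordering of that step is in fact slightly cleaner.
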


\begin{proof}
Since rank$(Y)=L$ and $Y=\Phi\widetilde{X}$, it is obvious that rank$(\widetilde{X})=L$. Consequently, the singular value decomposition of $\widetilde{X}$ can be denoted as
$\widetilde{X}=\widetilde{U}_{\ell}\widetilde{\Sigma}_{(\ell)}\widetilde{V}^{\ast}$, where $\widetilde{U}_{\ell}$ is the first $L$ columns of $\widetilde{U}$ and $\widetilde{\Sigma}_{(\ell)}$ denotes the first $L$ rows of $\widetilde{\Sigma}$. Since $\widetilde{U}_{\ell}$ can be regarded as an orthonormal basis for the range of $\widetilde{X}$, without loss of generality, let $\widetilde{Q}=\widetilde{U}_{\ell}$, we have
\begin{equation*}
 \begin{array}{l}
\|[\widetilde{X}\widetilde{V}\widetilde{\Sigma}_{(\ell)}^{-1}]_{(T)}\|_{F}\geq\|[\widetilde{X}\widetilde{V}\widetilde{\Sigma}_{(\ell)}^{-1}]_{(S)}\|_{F}.\\
 \end{array}
\end{equation*}
It then follows that
\begin{equation*}
 \begin{array}{l}
\widetilde{\sigma}_{\min}^{-1}\|\widetilde{X}_{(T)}\|_{F}\geq\widetilde{\sigma}_{\max}^{-1}\|\widetilde{X}_{(S)}\|_{F},\\
 \end{array}
\end{equation*}
where $\widetilde{\sigma}_{\min}$ and $\widetilde{\sigma}_{\max}$ denote the smallest and the largest singular value of $\widetilde{\Sigma}_{(\ell)}$.
Eliminating the common terms over $T\bigcap S$, we obtain
\begin{equation*}
 \begin{array}{l}
\widetilde{\sigma}_{\min}^{-1}\|[\widetilde{W}+\Phi^{\ast}(\Phi\Phi^\ast)^{-1}(Y-\Phi \widetilde{W})]_{(T\setminus S)}\|_{F}\geq\widetilde{\sigma}_{\max}^{-1}\|[\widetilde{W}+\Phi^{\ast}(\Phi\Phi^\ast)^{-1}(Y-\Phi \widetilde{W})]_{(S\setminus T)}\|_{F}.\\
 \end{array}
\end{equation*}
For the left hand,
\begin{equation*}
 \begin{array}{l}
\widetilde{\sigma}_{\min}^{-1}\|[\widetilde{W}+\Phi^{\ast}(\Phi\Phi^\ast)^{-1}(Y-\Phi \widetilde{W})]_{(T\setminus S)}\|_{F}\\
=\widetilde{\sigma}_{\min}^{-1}\|[\widetilde{W}-X+\Phi^{\ast}(\Phi\Phi^\ast)^{-1}(\Phi X+E-\Phi \widetilde{W})]_{(T\setminus S)}\|_{F}\\
=\widetilde{\sigma}_{\min}^{-1}
\|[(I-\Phi^{\ast}(\Phi\Phi^\ast)^{-1}\Phi)(\widetilde{W}-X)+\Phi^{\ast}(\Phi\Phi^\ast)^{-1}E]_{(T\setminus S)}\|_{F}.\\
 \end{array}
\end{equation*}
The right hand satisfies
\begin{equation*}
 \begin{array}{l}
\widetilde{\sigma}_{\max}^{-1}\|[\widetilde{W}+\Phi^{\ast}(\Phi\Phi^\ast)^{-1}(Y-\Phi \widetilde{W})]_{(S\setminus T)}\|_{F}\\
=\widetilde{\sigma}_{\max}^{-1}
\|[\widetilde{W}+\Phi^{\ast}(\Phi\Phi^\ast)^{-1}(\Phi X+E-\Phi \widetilde{W})+X-X]_{(S\setminus T)}\|_{F}\\
\geq\widetilde{\sigma}_{\max}^{-1}\|X_{(S\setminus T)}\|_{F}-\widetilde{\sigma}_{\max}^{-1}
\|[(I-\Phi^{\ast}(\Phi\Phi^\ast)^{-1}\Phi)(\widetilde{W}-X)+\Phi^{\ast}(\Phi\Phi^\ast)^{-1}E]_{(S\setminus T)}\|_{F}.\\
 \end{array}
\end{equation*}
Therefore, we obtain
\begin{equation*}
 \begin{array}{l}
\widetilde{\sigma}_{\max}^{-1}\|X_{(S\setminus T)}\|_{F}\\
\leq \widetilde{\sigma}_{\max}^{-1}
\|[(I-\Phi^{\ast}(\Phi\Phi^\ast)^{-1}\Phi)(\widetilde{W}-X)+\Phi^{\ast}(\Phi\Phi^\ast)^{-1}E]_{(S\setminus T)}\|_{F}\\
+\widetilde{\sigma}_{\min}^{-1}
\|[(I-\Phi^{\ast}(\Phi\Phi^\ast)^{-1}\Phi)(\widetilde{W}-X)+\Phi^{\ast}(\Phi\Phi^\ast)^{-1}E]_{(T\setminus S)}\|_{F}\\
\leq\sqrt{2}\widetilde{\sigma}_{\min}^{-1}
\|[(I-\Phi^{\ast}(\Phi\Phi^\ast)^{-1}\Phi)(\widetilde{W}-X)+\Phi^{\ast}(\Phi\Phi^\ast)^{-1}E]_{(T\triangle S)}\|_{F}\\
\leq\sqrt{2}\widetilde{\sigma}_{\min}^{-1}\|[(I-\Phi^{\ast}(\Phi\Phi^\ast)^{-1}\Phi)(\widetilde{W}-X)]_{(T\triangle S)}\|_{F}\sqrt{2}\widetilde{\sigma}_{\min}^{-1}\|[\Phi^{\ast}(\Phi\Phi^\ast)^{-1}E]_{(T\triangle S)}\|_{F}\\
\leq\sqrt{2}\widetilde{\sigma}_{\min}^{-1}(\gamma_{s+\widetilde{s}+t}\|X-\widetilde{W}\|_{F}+\sqrt{1+\theta_{t+s}}\|E\|_{F}).
 \end{array}
\end{equation*}
The last step is due to  {\em Remark \ref{remark11}} and {\em Lemma \ref{rema}}. In view of {\em Definition \ref{feasible}}, we derive
\begin{equation*}
 \begin{array}{l}
\|X_{(S\setminus T)}\|_{F}\leq\sqrt{2}\alpha(\gamma_{s+\widetilde{s}+t}\|X-\widetilde{W}\|_{F}+\sqrt{1+\theta_{t+s}}\|E\|_{F}).
 \end{array}
\end{equation*}
\end{proof}

\begin{lemma}\label{lemm6}
Let $Y=\Phi X+E$, where $X\in\mathbb{C}^{N\times L}$ is $s$ row-sparse signal matrix, and $E\in\mathbb{C}^{M\times L}$ is the measurement error.  Let $S=$supp$(X)$ be the index set of the $s$ sparse rows of $X$.  Denote by $\widetilde{Q}=$orth$(\widetilde{X})$ the orthogonal basis of the row-space of $X$, and $T$ the index set of $t\geq s$ largest values of $\|\widetilde{Q}_{i\cdot}\|_{2}$. If $\overline{W}$ is the feedback of $\widetilde{X}$ given by $\overline{W}_{(T)}=\widetilde{X}_{(T)}+(\Phi_{T}^{\ast}\Phi_{T})^{-1}\Phi_{T}^{\ast}\Phi_{T^{c}}\widetilde{X}_{(T^{c})}$ and $\overline{W}_{(T^{c})}=0$, then
\begin{equation*}\label{hh}
 \begin{array}{l}
\|(X-\overline{W})\|_{F}\leq\frac{\|X_{(T^{c})}\|_{F}}{\sqrt{1-\delta_{s+t}^{2}}}+\frac{\sqrt{1+\delta_{t}}\|E\|_{F}}{1-\delta_{s+t}}.
 \end{array}
\end{equation*}
\end{lemma}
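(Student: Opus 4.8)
The plan is first to observe that, on the active set $T$, the feedback $\overline W$ is nothing but the least-squares (projected) solution, and then to estimate the two row-blocks of $X-\overline W$ — the one indexed by $T$ and the one indexed by $T^c$ — separately before recombining them.

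Using that $\widetilde X$ is feasible, $\Phi\widetilde X=Y$, I would write $Y=\Phi_T\widetilde X_{(T)}+\Phi_{T^c}\widetilde X_{(T^c)}$ and multiply by $(\Phi_T^{\ast}\Phi_T)^{-1}\Phi_T^{\ast}$: this gives $\overline W_{(T)}=(\Phi_T^{\ast}\Phi_T)^{-1}\Phi_T^{\ast}Y$, i.e.\ $\overline W=\arg\min\{\|Y-\Phi Z\|_F:\mathrm{supp}(Z)\subseteq T\}$. Two consequences are used throughout: $X-\overline W$ is supported in $T\cup S$, a set of cardinality at most $s+t$; and its $T^c$-block equals $X_{(T^c)}$, which, since $X$ is supported in $S$, is itself supported in $S\setminus T$. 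Substituting $Y=\Phi X+E=\Phi_TX_{(T)}+\Phi_{T^c}X_{(T^c)}+E$ into the formula for $\overline W_{(T)}$ and using $(\Phi_T^{\ast}\Phi_T)^{-1}\Phi_T^{\ast}\Phi_T=I$ yields
\[
(X-\overline W)_{(T)}=-(\Phi_T^{\ast}\Phi_T)^{-1}\Phi_T^{\ast}\bigl(\Phi_{T^c}X_{(T^c)}+E\bigr),\qquad \Phi_T(X-\overline W)_{(T)}=-P_T\bigl(\Phi_{T^c}X_{(T^c)}+E\bigr),
\]
where $P_T=\Phi_T(\Phi_T^{\ast}\Phi_T)^{-1}\Phi_T^{\ast}$ is the orthogonal projector onto the image of $\Phi_T$.

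Next, and this is the technical heart, I would estimate $\|(X-\overline W)_{(T)}\|_F$ by playing the restricted isometry lower bound against the near-orthogonality bound. On the one hand, $(X-\overline W)_{(T)}$ is $t$-row-sparse, so {\em Definition \ref{RIP}} gives $\|\Phi_T(X-\overline W)_{(T)}\|_F^2\ge(1-\delta_{s+t})\|(X-\overline W)_{(T)}\|_F^2$. On the other hand, since $\Phi_T(X-\overline W)_{(T)}$ lies in the image of $\Phi_T$, the projector $P_T$ can be dropped inside the inner product, so
\[
\|\Phi_T(X-\overline W)_{(T)}\|_F^2=-\langle\Phi_T(X-\overline W)_{(T)},\Phi_{T^c}X_{(T^c)}\rangle-\langle\Phi_T(X-\overline W)_{(T)},E\rangle .
\]
The first term is at most $\delta_{s+t}\|(X-\overline W)_{(T)}\|_F\|X_{(T^c)}\|_F$ by {\em Lemma \ref{lemma2.7}}, since $T$ and $S\setminus T$ are disjoint with total size at most $s+t$; the second is at most $\sqrt{1+\delta_t}\,\|(X-\overline W)_{(T)}\|_F\|E\|_F$ by Cauchy--Schwarz and $\|\Phi_T\|_2\le\sqrt{1+\delta_t}$. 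Combining the two and cancelling one factor of $\|(X-\overline W)_{(T)}\|_F$ gives
\[
\|(X-\overline W)_{(T)}\|_F\le\frac{\delta_{s+t}}{1-\delta_{s+t}}\,\|X_{(T^c)}\|_F+\frac{\sqrt{1+\delta_t}}{1-\delta_{s+t}}\,\|E\|_F .
\]

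Finally, because $\overline W_{(T^c)}=0$, the two blocks are orthogonal and $\|X-\overline W\|_F^2=\|(X-\overline W)_{(T)}\|_F^2+\|X_{(T^c)}\|_F^2$; inserting the previous bound and simplifying the resulting expression in $\delta_{s+t}$ and $\delta_t$ produces the asserted inequality. I expect this last simplification to be the main obstacle: one must carry the cross terms through the square and use the elementary relations among $\delta_{s+t}$, $1-\delta_{s+t}$ and $1-\delta_{s+t}^2$ with care so as to reach the coefficient $1/\sqrt{1-\delta_{s+t}^2}$ in front of $\|X_{(T^c)}\|_F$ rather than the cruder $1/(1-\delta_{s+t})$ that a bare triangle inequality leaves behind; the noise coefficient $\sqrt{1+\delta_t}/(1-\delta_{s+t})$, by contrast, already falls out of the $T$-block estimate. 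The only structural input beyond the RIP is the feasibility of $\widetilde X$ used at the very start.
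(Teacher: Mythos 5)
Your setup is correct as far as it goes: the identity $(X-\overline W)_{(T)}=-(\Phi_T^{\ast}\Phi_T)^{-1}\Phi_T^{\ast}(\Phi_{T^c}X_{(T^c)}+E)$, the projector argument, and the two RIP estimates are all valid, and they do yield
\begin{equation*}
\|(X-\overline W)_{(T)}\|_F\;\le\;\frac{\delta_{s+t}}{1-\delta_{s+t}}\,\|X_{(T^c)}\|_F+\frac{\sqrt{1+\delta_t}}{1-\delta_{s+t}}\,\|E\|_F .
\end{equation*}
But the final step you defer to --- ``inserting the previous bound and simplifying'' --- is not a matter of care; the inequality you need there is false. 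Take $E=0$ and write $\delta=\delta_{s+t}$, $a=\|X_{(T^c)}\|_F$. Your bound gives $\|X-\overline W\|_F^2\le a^2\bigl(1+\tfrac{\delta^2}{(1-\delta)^2}\bigr)=a^2\,\tfrac{1-2\delta+2\delta^2}{(1-\delta)^2}$, while the lemma asserts $\|X-\overline W\|_F^2\le \tfrac{a^2}{1-\delta^2}$. The difference of the two coefficients is $\tfrac{1-2\delta+2\delta^2}{(1-\delta)^2}-\tfrac{1}{1-\delta^2}=\tfrac{2\delta^3}{(1-\delta)^2(1+\delta)}>0$ for every $\delta\in(0,1)$, so your route lands strictly above the claimed constant (e.g.\ at $\delta=1/2$ it gives $2a^2$ versus $4a^2/3$). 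The loss is structural: by invoking the RIP lower bound $\|\Phi_T D_{(T)}\|_F^2\ge(1-\delta_{s+t})\|D_{(T)}\|_F^2$ and dividing through, you close the $T$-block estimate too early, converting a ``$\delta$'' into a ``$\delta/(1-\delta)$'' before the Pythagorean recombination.

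The paper's proof avoids this by never isolating $\|X_{(T^c)}\|_F$ in the $T$-block estimate. From the orthogonality of the residual, $\langle\Phi\overline W-Y,\Phi Z\rangle=0$ for all $Z$ supported on $T$, it writes $\|(X-\overline W)_{(T)}\|_F^2=\langle X-\overline W,(I-\Phi^{\ast}\Phi)\bm{H}_T(X-\overline W)\rangle+\langle E,\Phi\bm{H}_T(X-\overline W)\rangle$ and concludes, via Lemma~\ref{lemma2.7}, that $\|(X-\overline W)_{(T)}\|_F\le\delta_{s+t}\|X-\overline W\|_F+\sqrt{1+\delta_t}\|E\|_F$ --- the \emph{full} error norm on the right, and no division by $1-\delta_{s+t}$. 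Substituting this into $\|X-\overline W\|_F^2=\|(X-\overline W)_{(T)}\|_F^2+\|X_{(T^c)}\|_F^2$ produces a quadratic inequality in $\|X-\overline W\|_F$ itself, and bounding that quantity by the larger root of the quadratic is precisely what generates the coefficients $1/\sqrt{1-\delta_{s+t}^2}$ and $\sqrt{1+\delta_t}/(1-\delta_{s+t})$. If you want to salvage your computation, you must replace your cross-term bound $\delta_{s+t}\|D_{(T)}\|_F\|X_{(T^c)}\|_F$ by one involving $\|X-\overline W\|_F$ and then solve the resulting quadratic rather than dividing by $1-\delta_{s+t}$.
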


\begin{proof}
For any $Z\in\mathbb{C}^{N\times L}$ supported on $T$,
\begin{equation*}
 \begin{array}{l}
\langle \Phi \overline{W}-Y,\Phi Z\rangle\\
=\langle \Phi_{T}\widetilde{X}_{(T)}+\Phi_{T}( \Phi_{T}^{\ast} \Phi_{T})^{-1}\Phi_{T}^{\ast} \Phi_{T^{c}}\widetilde{X}_{(T^{c})}-Y,\Phi_{T}Z_{(T)}\rangle\\
=\langle \Phi_{T}^{\ast}(\Phi_{T}\widetilde{X}_{(T)}+\Phi_{T^{c}}\widetilde{X}_{(T^{c})}-Y),Z_{(T)}\rangle\\
=\langle \Phi_{T}^{\ast}(\Phi \widetilde{X}-Y),Z_{(T)}\rangle\\
=0.\\
 \end{array}
\end{equation*}
The last step is due to the feasibility of $\widetilde{X}$.
The inner product can also be written as
$\langle \Phi\overline{W}-Y,\Phi Z\rangle=\langle (\Phi\overline{W}-\Phi X-E),\Phi Z\rangle=0$. Therefore,
$\langle (\overline{W}-X),\Phi^{\ast}\Phi Z\rangle=\langle E,\Phi Z\rangle,~\forall~ Z\in\mathbb{C}^{N\times L}$ supported on $T$. Since $(\overline{W}-X)_{T}$ is supported on $T$, one has $\langle (\overline{W}-X),\Phi^{\ast}\Phi_{T}(\overline{W}-X)_{(T)}\rangle=\langle E,\Phi_{T}(\overline{W}-X)_{(T)}\rangle.$

\noindent Consequently,
\begin{equation*}
 \begin{array}{l}
\|(\overline{W}-X)_{(T)}\|_{F}^{2}=\langle(\overline{W}-X),\bm{H}_{T}(\overline{W}-X)\rangle\\
=|\langle (X-\overline{W}),(I-\Phi^{\ast}\Phi)\bm{H}_{T}(X-\overline{W})\rangle+|\langle E,\Phi\bm{H}_{T}(X-\overline{W})\rangle|\\
\leq\delta_{s+t}\|X-\overline{W}\|_{F}\|(X-\overline{W})_{(T)}\|_{F}+\sqrt{1+\delta_{t}}\|E\|_{F}\|(X-\overline{W})_{(T)}\|_{F}.\\
 \end{array}
\end{equation*}
The last step is due to {\em Lemma \ref{lemma2.7}} and {\em Definition \ref{RIP}}.
\noindent We can obtain
\begin{equation*}
 \begin{array}{l}
\|(X-\overline{W})_{(T)}\|_{F}\leq\delta_{s+t}\|X-\overline{W}\|_{F}+\sqrt{1+\delta_{t}}\|E\|_{F}.
 \end{array}
\end{equation*}
It then follows that
\begin{equation*}
 \begin{array}{l}
\|(X-\overline{W})\|_{F}^{2}=\|(X-\overline{W})_{(T)}\|_{F}^{2}+\|(X-\overline{W})_{(T^{c})}\|_{F}^{2}\\
\leq(\delta_{s+t}\|X-\overline{W}\|_{F}+\sqrt{1+\delta_{t}}\|E\|_{F})^{2}+\|X_{(T^{c})}\|_{F}^{2}.\\
 \end{array}
\end{equation*}
This in turn implies $p(\|X-\widetilde{W}\|_{F})\leq0$, where $p(\cdot)$  is a quadratic polynomial, defined by
\begin{equation*}
 \begin{array}{l}
p(x)=(1-\delta_{s+t}^{2})x^{2}-2\delta_{s+t}\sqrt{1+\delta_{t}}\|E\|_{F}x-(1+\delta_{t})\|E\|_{F}^{2}-\|X_{(T^{c})}\|_{F}^{2}.
 \end{array}
\end{equation*}
\noindent Since $(1-\delta_{s+t}^{2})\geq0$, it means that $\|(X-\overline{W})\|_{F}$ is smaller than the largest root of $p(\cdot)$
\begin{equation*}
 \begin{array}{l}
\|(X-\overline{W})\|_{F}\leq\frac{\delta_{s+t}\sqrt{1+\delta_{t}}\|E\|_{F}+\sqrt{(1+\delta_{t})\|E\|_{F}^{2}+({1-\delta_{s+t}^{2})\|X_{(T^{c})}\|_{F}^{2}}}}{1-\delta_{s+t}^{2}}\\
\leq\frac{\|X_{(T^{c})}\|_{F}}{\sqrt{1-\delta_{s+t}^{2}}}+\frac{\sqrt{1+\delta_{t}}\|E\|_{F}}{1-\delta_{s+t}}.\\
 \end{array}
\end{equation*}
\end{proof}

\begin{theorem} \label{the1}
Let $Y=\Phi X+E$, where  $X$ is the $s$ row-sparse signal matrix.
Then the sequence $\{W^{k}\}$ produced by OSNST+HT+$f$-FB satisfies
\begin{equation*}
 \begin{array}{l}
\|(X-W^{k})\|_{F}\leq\rho_{s+f(k)+f(k-1)}^{k}\|X-W^{0}\|_{F}+\frac{\kappa_{s+f(k)+f(k-1)}(1-\rho_{s+f(k)+f(k-1)}^{k})}{1-\rho_{s+f(k)+f(k-1)}}\|E\|_{F},\\
 \end{array}
\end{equation*}
where $\rho_{\ell}=\sqrt{\frac{2\alpha^{2}\gamma_{\ell}^{2}}{1-\delta_{\ell}^{2}}}$ and $\kappa_{\ell}=(\frac{\sqrt{1+\delta_{\ell}}}{1-\delta_{\ell}}+
\frac{\sqrt{2\alpha^{2}(1+\theta_{\ell})}}{\sqrt{1-\delta_{\ell}^{2}}})$.
\end{theorem}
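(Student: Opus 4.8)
The plan is to turn one pass of the while-loop into a one-step contraction bound for $a_{k}:=\|X-W^{k}\|_{F}$, and then to unroll the resulting recursion as a geometric series. At the $k$-th iteration take $\widetilde W=W^{k-1}$, which is $f(k-1)$ row-sparse (with the convention $f(0)=0$, since $W^{0}=0$); then $\widetilde X$ in Lemmas \ref{Lemma2.11} and \ref{lemm6} coincides with $X^{k}=W^{k-1}+\Phi^{\ast}(\Phi\Phi^{\ast})^{-1}(Y-\Phi W^{k-1})$, $\widetilde Q$ with $Q^{k}=\mathrm{orth}(X^{k})$, and $T$ with $T_{k}$, the set of indices of the $f(k)$ largest $\|Q^{k}_{i\cdot}\|_{2}$; here we invoke the standing requirement $f(k)\geq s$ so that the hypotheses $t\geq s$ of both lemmas are met. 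By construction $W^{k}$ is precisely the feedback $\overline W$ of $\widetilde X=X^{k}$ appearing in Lemma \ref{lemm6}.

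First I would apply Lemma \ref{Lemma2.11} with $t=f(k)$ and $\widetilde s=f(k-1)$ to obtain
\[
\|X_{(T_{k}^{c})}\|_{F}\leq\sqrt{2}\,\alpha\bigl(\gamma_{s+f(k)+f(k-1)}\|X-W^{k-1}\|_{F}+\sqrt{1+\theta_{f(k)+s}}\,\|E\|_{F}\bigr),
\]
and then Lemma \ref{lemm6}, applied with the same $T_{k}$, to obtain
\[
\|X-W^{k}\|_{F}\leq\frac{\|X_{(T_{k}^{c})}\|_{F}}{\sqrt{1-\delta_{s+f(k)}^{2}}}+\frac{\sqrt{1+\delta_{f(k)}}\,\|E\|_{F}}{1-\delta_{s+f(k)}}.
\]
Substituting the first estimate into the second and grouping the $\|E\|_{F}$ terms gives the one-step inequality $a_{k}\leq\widehat\rho_{k}\,a_{k-1}+\widehat\kappa_{k}\,\|E\|_{F}$, with
\[
\widehat\rho_{k}=\frac{\sqrt{2}\,\alpha\,\gamma_{s+f(k)+f(k-1)}}{\sqrt{1-\delta_{s+f(k)}^{2}}},\qquad
\widehat\kappa_{k}=\frac{\sqrt{1+\delta_{f(k)}}}{1-\delta_{s+f(k)}}+\frac{\sqrt{2\alpha^{2}(1+\theta_{f(k)+s})}}{\sqrt{1-\delta_{s+f(k)}^{2}}}.
\]

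Next I would homogenize all indices to the single value $\ell=s+f(k)+f(k-1)$. Since the RIC $\delta_{s}$, the P-RIC $\gamma_{s}$ and $\theta_{s}$ are non-decreasing in the sparsity level $s$, and $f$ is non-decreasing, we get $\widehat\rho_{k}\leq\rho_{\ell}$ and $\widehat\kappa_{k}\leq\kappa_{\ell}$ with $\rho_{\ell},\kappa_{\ell}$ exactly as in the statement; moreover $j\mapsto s+f(j)+f(j-1)$ is non-decreasing, so for every $j\leq k$ one also has $\widehat\rho_{j}\leq\rho_{\ell}$ and $\widehat\kappa_{j}\leq\kappa_{\ell}$. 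Iterating $a_{j}\leq\widehat\rho_{j}a_{j-1}+\widehat\kappa_{j}\|E\|_{F}$ from $j=k$ down to $j=1$ yields
\[
a_{k}\leq\Bigl(\prod_{j=1}^{k}\widehat\rho_{j}\Bigr)a_{0}+\Bigl(\sum_{j=1}^{k}\widehat\kappa_{j}\prod_{i=j+1}^{k}\widehat\rho_{i}\Bigr)\|E\|_{F}\leq\rho_{\ell}^{\,k}\,a_{0}+\kappa_{\ell}\sum_{m=0}^{k-1}\rho_{\ell}^{\,m}\,\|E\|_{F},
\]
and summing the finite geometric series $\sum_{m=0}^{k-1}\rho_{\ell}^{\,m}=(1-\rho_{\ell}^{\,k})/(1-\rho_{\ell})$ (valid for $\rho_{\ell}\neq1$; the bound is informative precisely when $\rho_{\ell}<1$) gives exactly the claimed inequality with $a_{0}=\|X-W^{0}\|_{F}$.

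The two lemma applications and the geometric sum are routine; the one place that needs care is the index bookkeeping — verifying that monotonicity of $\delta_{s},\gamma_{s},\theta_{s}$ in $s$ together with monotonicity of $f$ legitimately lets one both (i) inflate each step's constants to the common index $\ell=s+f(k)+f(k-1)$ and (ii) dominate all earlier steps' constants by the $k$-th step's, so that a single pair $\rho_{\ell},\kappa_{\ell}$ controls the whole telescoped estimate. One should also keep in mind the standing assumption $f(k)\geq s$ (required by Lemmas \ref{Lemma2.11}--\ref{lemm6}) and the convention $f(0)=0$ forced by $W^{0}=0$.
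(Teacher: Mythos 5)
Your proposal is correct and follows essentially the same route as the paper's own proof: apply Lemma \ref{Lemma2.11} with $\widetilde W=W^{k-1}$, $T=T_k$, then Lemma \ref{lemm6} with $\overline W=W^k$, combine into a one-step contraction, and use monotonicity of $\delta_\ell$, $\gamma_\ell$, $\theta_\ell$ and $f$ to unroll the recursion into the stated geometric-series bound. Your treatment is in fact somewhat more careful than the paper's, since you make explicit the standing requirements $f(k)\geq s$ and $f(0)=0$ and the index-domination argument for earlier iterations, which the paper leaves implicit.
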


\begin{proof}
 Applying {\em Lemma \ref{Lemma2.11}} to $\widetilde{W}=W^{k-1}$ and $T=T_{k}$ gives
\begin{equation*}
 \begin{array}{l}
\|X_{(T^{c}_{k})}\|_{F}\leq
\sqrt{2}\alpha(\gamma_{s+f(k-1)+f(k)}\|X-W^{k-1}\|_{F}+\sqrt{1+\theta_{s+f(k)}}\|E\|_{F}),
 \end{array}
\end{equation*}
\noindent and setting $\overline{W}=W^{k}$ and $T=T_{k}$ in {\em Lemma \ref{lemm6}} obtains
\begin{equation*}
\begin{array}{l}
\|(X-W^{k})\|_{F}\leq\frac{\|X_{(T^{c}_{k})}\|_{F}}{\sqrt{(1-\delta_{s+f(k)}^{2})}}+\frac{\sqrt{1+\delta_{f(k)}}\|E\|_{F}}{1-\delta_{s+f(k)}}.
 \end{array}
\end{equation*}
\noindent Combining these two inequalities, we have
\begin{equation*}
\begin{array}{l}
\|(X-W^{k})\|_{F}\leq\sqrt{\frac{2\alpha^{2}\gamma_{s+f(k)+f(k-1)}^{2}}{(1-\delta_{s+f(k)}^{2})}}\|X-W^{k-1}\|_{F}+(\frac{\sqrt{1+\delta_{f(k)}}}{1-\delta_{s+f(k)}}+
\frac{\sqrt{2\alpha^{2}(1+\theta_{s+f(k)})}}{\sqrt{1-\delta_{s+f(k)}^{2}}})\|E\|_{F}.
 \end{array}
\end{equation*}
Since $\delta_{\ell} $ and $\gamma_{\ell} $ are all non-decreasing \cite{Candes}, $\rho_{\ell}$ and $\kappa_{\ell}$ are also all non-decreasing as $\ell$ increases for all integer $\ell$. Note that $f(\ell)$ is also a nondecreasing function, it then follows that
\begin{equation*}
\begin{array}{l}
\|(X-W^{k})\|_{F}\leq\rho_{s+f(k)+f(k-1)}^{k}\|X-W^{0}\|_{F}+\frac{\kappa_{s+f(k)+f(k-1)}(1-\rho_{s+f(k)+f(k-1)}^{k})}{1-\rho_{s+f(k)+f(k-1)}}\|E\|_{F}.
 \end{array}
\end{equation*}
\end{proof}

Consequently, if the RIP and the P-RIP of the matrix $\Phi$ obeys $2\alpha^{2}\gamma_{s+f(k)+f(k-1)}^{2}+\delta_{s+f(k)+f(k-1)}^{2}<1$, the OSNST+HT+FB algorithm is guaranteed to converge.

\begin{figure}[th]
\centering
\begin{tabular}{c@{\hskip -0.9cm}c}
\resizebox {1.85in}{1.5in} {\includegraphics{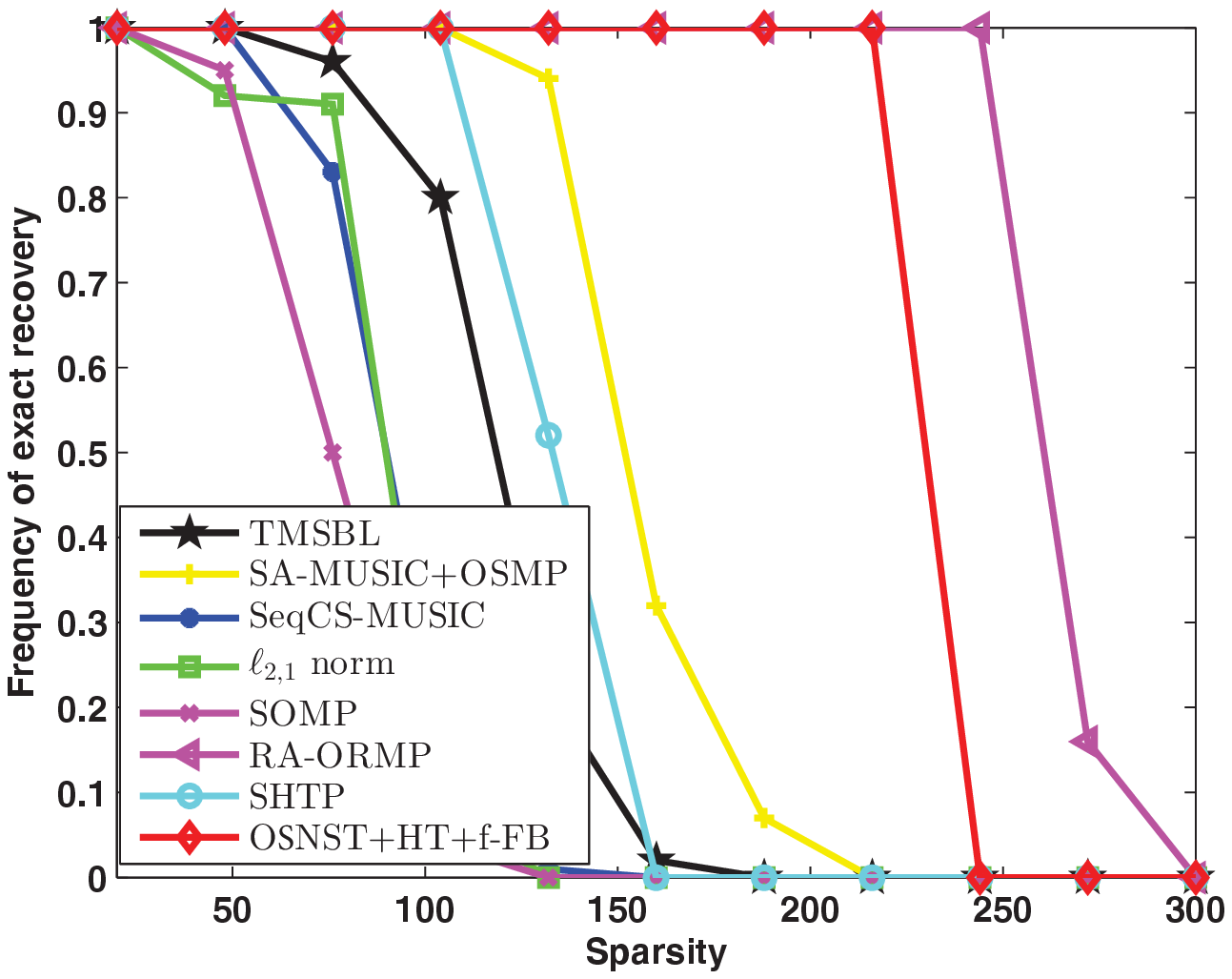}}
\quad & \quad
\resizebox {1.85in}{1.5in} {\includegraphics{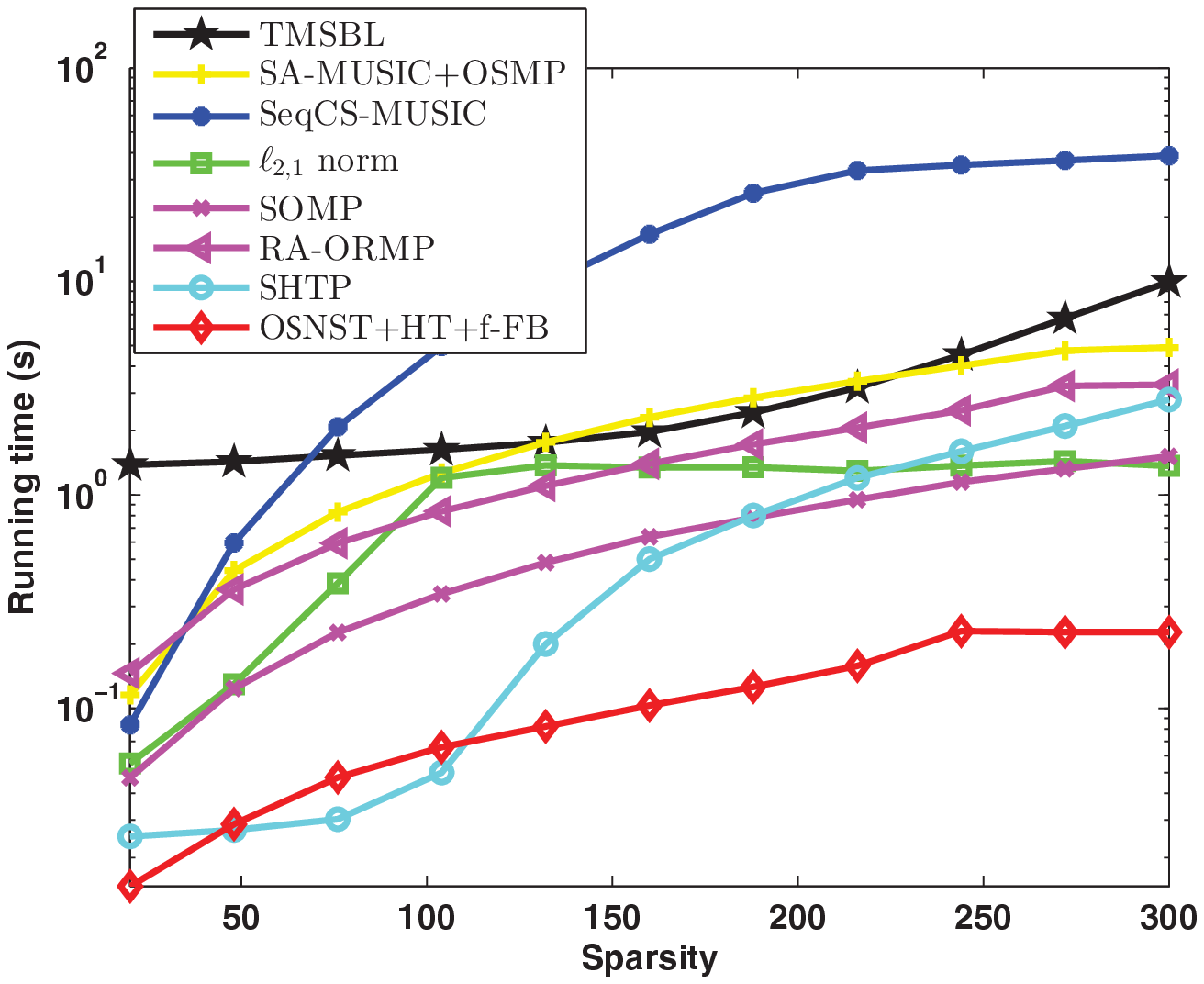}}
\end{tabular}
\vskip -0.3cm
\caption{\small  Left: Frequency of exact recovery as a function of sparsity; right: running time as a function of sparsity.}
\label{figure:Micro_Doppler}
\end{figure}

\section{EXPERIMENTS}
\label{}
In this experiment, the measurement matrix $\Phi$ is an $300\times1000$ Gaussian random matrix and the number of snapshots is $10$. To model the temporal correlation of MMV problem, we employ an autoregressive process of order $1$, AR(1). As a result, the $j$-th snapshot $X_{\cdot j}$ is generated according to the model
\begin{equation*}
\begin{array}{l}
X_{\cdot j}=\beta X_{\cdot (j-1)}+(1-\beta)\epsilon_{j},
 \end{array}
\end{equation*}
where $\beta$ is the AR model parameter controlling the temporal correlation and $\epsilon_{j}$ is the level of white Gaussian perturbation. The support of a sparse signal is also chosen randomly and the nonzero entries of Gaussian sparse signals are drawn independently from the Gaussian distribution with zero mean and unit variance. A successful recovery is recorded when $\|X-\widehat{X}\|_{F}/\|X\|_{F}\leq10^{-4}$, where $X$ is the exact signal matrix and $\widehat{X}$ denotes the recovered signal. Each experiment is tested for $100$ (random) trials. A matlab implementation of the proposed algorithm is also available at

 \url{https://www.dropbox.com/s/2avudk770m4c6rz/OSNST%2BHT%2Bf-FB.zip?dl=0}.

We first study the mechanisms of $f$-feedback by introducing six particular index selection functions: $f(x)=x$, $f(x)=3x$, $f(x)=6x$, $f(x)=9x$, $f(x)=12x$ and $f(x)=x^{2}$. As discussed, higher
critical sparsity represents better empirical recovery performance. Figure 1 shows the frequency of exact recovery and the running time as functions of the sparsity levels $s$. As shown, linear functions with modest gradients present similar performance, which is better than the quadratic function $f(x)=x^{2}$. In addition, one can accelerate the convergence of the class of OSNST+HT+$f$-FB algorithms by adjusting the cardinality of indices per iteration.

Also presented are comparisons among our OSNST+HT+$f$-FB and state-of-the-art techniques such as SOMP \cite{Tropp1}, $\ell_{2,1}$ norm \cite{Berg}, SHTP  \cite{Foucart,Blanchard}, RA-ORMP \cite{Eldar}, TMSBL \cite{Zhangqq}, SA-MUSIC+OSMP \cite{KLee}, SeqCS-MUSIC \cite{MKim,Kimdd} in terms of frequency of exact recovery and running time. In this experiment, we adopt a modest setting $f(x)=6x$, which can be applied to other applications. In Figure 2, experimental results  show that OSNST+HT+$f$-FB still delivers reasonable performance better than that of  SOMP, $\ell_{2,1}$ norm, SHTP, TMSBL, SA-MUSIC+OSMP, and SeqCS-MUSIC, though slightly under-performs that of  RA-ORMP. For the execution-time comparison, our algorithm achieves the best performance. Numerical experiments show that our algorithm has a clearly advantageous balance of efficiency, adaptivity and accuracy compared with other state-of-the-art algorithms.

\ifCLASSOPTIONcaptionsoff
  \newpage
\fi

%





\begin{thebibliography}{1}
\bibitem{Chen} J. Chen and X. Huo,  ``Theoretical results on sparse representations of multiple-measurement vectors," {\it IEEE Trans. Signal Process.}, vol. 54, no. 12, pp. 4634-4643, Dec. 2006.

\bibitem{Tropp1} J. A. Tropp, A. C. Gilbert, and M. J. Strauss,  ``Algorithms for simultaneous sparse approximation. Part I: Greedy pursuit," {\it Signal Process.}, vol. 86, no. 3, pp. 572-588, 2006.

 \bibitem{Determeand2} J. F. Determeand, L. Louveaux, L. Jacques and F. Horlin, ``On the exact recovery condition of simultaneous  orthogonal matching pursuit," {\it IEEE Signal Process. Lett.}, vol. 23, no.1, pp. 164-168, Jan. 2016.

\bibitem{Determeand3} J. F. Determeand, L. Louveaux, L. Jacques and F. Horlin, ``Improving the correlation lower bound for simultaneous orthogonal matching pursuit," {\it IEEE Signal Process. Lett.}, vol. 23, no. 11, pp. 1642-1646, Nov. 2018.

\bibitem{Determeand4} J. Determe, J. Louveaux, L. Jacques, and F. Horlin, ``On the noise robustness of simultaneous orthogonal matching pursuit," {\it IEEE Trans. Signal Process.}, vol. 65, no. 4, pp. 864-875, Feb. 2017.

\bibitem{Cotter} S. F. Cotter, B. D. Rao, K. Engang, and K. Kreutz-Delgado, ``Sparse solutions to linear inverse problems with multiple measurement vectors," {\it IEEE Trans. Signal Process.}, vol. 53, no. 7, pp. 2477-2488, Jul. 2005.

\bibitem{Tropp2} J. A. Tropp,  ``Algorithms for simultaneous sparse approximation-Part II: Convex relaxation,"  {\it Signal Process.}, vol. 86, pp. 589-602, 2006.

\bibitem{Berg} E. V. Berg and M. P. Friedlander,   ``Theoretical and empirical results for recovery from multiple measurements,"  {\it IEEE Trans. Inf. Theory}, vol. 56, no. 5, pp. 2516-2527, May 2010.

\bibitem{Mishali} M. Mishali and Y. C. Eldar, ``Reduce and boost: Recovering arbitrary sets of jointly sparse vectors," {\it IEEE Trans. Signal Process.}, vol. 56, pp. 4692-4702, Oct. 2008.

\bibitem{Eldar} Y. C. Eldar and H. Rauhut, ``Average case analysis of multichannel sparse recovery using convex relaxation," {\it IEEE Trans. Inform. Theory}, vol. 56, no. 1, pp. 505-519, Jan. 2010.

\bibitem{Hyder} M. M. Hyder and K. Mahata, ``A robust algorithm for joint-sparse recovery," {\it IEEE Signal Process. Lett.}, vol. 16, no. 12, pp. 1091-1094, Dec. 2009.

\bibitem{Wipf} D.Wipf and S. Nagarajan, ``Iterative reweighted and methods for finding sparse solutions," {\it IEEE J. Sel. Topics Signal Process.}, vol. 4, no. 2, pp. 317-329, Apr. 2010.

\bibitem{Du} X. Du, L. Cheng, and L. Liu, ``A swarm intelligence algorithm for joint sparse recovery," {\it IEEE Signal Process. Lett.}, vol. 20, no. 6, pp. 611-614, Jun. 2013.

\bibitem{Khanna} S. Khanna and  C.R. Murthy, ``Sparse recovery from multiple measurement vectors using exponentiated gradient updates," {\it IEEE Signal Process. Lett.}, vol. 25, no. 10, pp. 1485-1489, Oct. 2018.

\bibitem{Foucart} S. Foucart, ``Recovering jointly sparse vectors via hard thresholding
pursuit," {\it in Proc. SAMPTA}, 2011.

\bibitem{Blanchard}  J. D. Blanchard, M. Cermak, D. Hanle, and Y. Jing, ``Greedy algorithms for joint sparse recovery," {\it IEEE Trans. Signal Process.}, vol. 62, no. 7, pp. 1694-1704, Apr. 2014.

\bibitem{Zhangqq}  Z. L. Zhang and B. D. Rao, ``Sparse signal recovery with temporally corre-lated source vectors using sparse Bayesian learning," {\it IEEE Trans. Signal Process.}, vol. 5, no. 5, pp. 912-926, Sep. 2011.

\bibitem{Ziniel2} J. Ziniel and P. Schniter, ``Efficient high-dimensional inference in the multiple measurement vector problem," {\it IEEE Trans. Signal Process.}, vol. 61, no. 2, pp. 340-354, Jan. 2013.

\bibitem{Wug2}  Q. Wu, Y. D. Zhang, M. G. Amin, and B. Himed, ``Multi-task Bayesiancompressive sensing exploiting intra-task dependency," {\it IEEE Trans. Signal Process.}, vol. 22, no. 4, pp. 430-434, Apr. 2015.

\bibitem{Joseph 2} G. Joseph and C. R. Murthy, ``A noniterative online Bayesian algorithm forthe recovery of temporally correlated sparse vectors," {\it IEEE Trans. Signal Process.}, vol. 65, no. 20, pp. 5510-5525, Oct. 2017.

 \bibitem{Shang 2}J. Shang, Z. Wang, and Q. Huang, ``A robust algorithm for joint sparserecovery  in  presence  of  impulsive  noise," {\it IEEE  Signal  Process. Lett.}, vol. 22, no. 8, pp. 1166-1170, Aug. 2015.

\bibitem{MKim} J. M. Kim, O. K. Lee, and J. C. Ye, ``Compressive MUSIC: Revisiting the link between compressive sensing and array signal processing," {\it IEEE Trans. Inf. Theory}, vol. 58, no. 1, pp. 278-301, Jan. 2012.

 \bibitem{Kimdd}  J. M. Kim, O. K. Lee, and  J. C. Ye, ``Improving  noise  robustness  insubspace-based  joint  sparse  recovery," {\it IEEE  Trans.  Signal  Process.}, vol. 60, no. 11, pp. 5799-5809,  Nov. 2012

\bibitem{KLee}  K. Lee, Y. Bresler, and M. Junge, ``Subspace methods for joint sparse recovery," {\it IEEE Trans. Inf. Theory}, vol. 58, no. 6, pp. 3613-3641, Jun. 2012.

\bibitem{Davies} M. E. Davies and Y. C. Eldar,`` Rank awareness in joint sparse recovery," {\it IEEE Trans. Inf. Theory}, vol. 58, no. 2, pp. 1135-1146, Feb. 2012.

\bibitem{Blanchards} J. Blanchard and M. Davies, ``Recovery guarantees for rank aware pursuits," {\it IEEE Signal Process. Lett.}, vol. 19, no. 7, pp. 427–430, Jul. 2012.

\bibitem{junKim} J. Kim, J, Wang, and B. Shim, ``Nearly Optimal Restricted Isometry Condition for
Rank Aware Order Recursive Matching Pursuit," {\it IEEE Trans. Signal Process.}, vol. 67, no. 17, pp. 4449-4463, Sep. 2019.

 \bibitem{ZWen} Z. Wen, B. Hou, and L. Jiao, ``Joint sparse recovery with semisupervised MUSIC," {\it IEEE  Signal  Process. Lett.}, vol. 24, no. 5, pp. 629-633, May. 2017.


\bibitem{WDai}  W. Dai and O. Milenkovic, ``Subspace pursuit for compressive sensing signal reconstruction," {\it IEEE Trans. Inf. Theory}, vol. 55, no. 5, pp. 2230-2249, May 2009.

\bibitem{Candes} E. J. Candes and T. Tao, ``Decoding by linear programming," {\it IEEE Trans. Inf. Theory}, vol. 51, no. 12, pp. 4203-4215, Dec. 2005.

\bibitem{SDLi} S. D. Li, Y. L. Liu, and T. B. Mi, ``Iterative hard thresholding for compressed sensing," {\it  Appl. Comput. Harmon. Anal.}, vol. 37, no. 1, pp. 69-88, Jul. 2014.

\end{thebibliography}
\end{document}